\documentclass[11pt]{article}
\usepackage[margin=1in]{geometry}
\usepackage{amsmath,amssymb,amsthm,booktabs,microtype}
\usepackage{natbib,authblk}
\usepackage[hidelinks]{hyperref}
\usepackage{orcidlink}
\usepackage{graphicx,placeins}
\newtheorem{proposition}{Proposition}
\newtheorem{corollary}{Corollary}
\newcommand{\Ee}{\mathbb E}
\newcommand{\Pp}{\mathbb P}
\DeclareMathOperator{\Var}{Var}
\DeclareMathOperator{\Cov}{Cov}
\title{Noise-adjusted turnover in estimated networks}
\author{Sultan Amed}
\author{Sayantan Banerjee \orcidlink{0000-0001-5414-4817}}
\affil{OM \& QT Area\\Indian Institute of Management Indore}
\date{}
\begin{document}
\maketitle
\begin{abstract}
	Economic networks are often estimated separately over two periods, and changes in their edge sets are interpreted as structural rewiring. Since both networks are estimated, observed turnover also reflects graph-selection error. We study the two-snapshot Hamming-turnover functional under a homogeneous edge-misclassification model. With known sensitivity and specificity and conditional independence of the estimated edge indicators across periods, latent turnover admits a closed-form unbiased adjustment based only on observed turnover and the two estimated graph sizes. We then examine the effects of sparsity, calibration error and dependence across periods. When the number of true links is proportional to \(p\), a false-positive probability of order \(p^{-1}\) generates expected spurious turnover of order \(p\). An error of the same order in calibrating the false-positive probability can likewise leave order-\(p\) bias after adjustment. We also derive the bias induced by cross-period dependence and give sufficient conditions for consistency relative to network size. Numerical results illustrate the finite-sample implications.
\end{abstract}

\noindent\textbf{Keywords:} economic networks; network turnover; graph estimation; measurement error; Hamming distance; sparse networks.\\

\section{Introduction}

Changes in estimated network structure are routinely used to describe shifts in economic dependence. In financial applications, networks are estimated over successive periods and changes in their edge sets are interpreted as evidence of changing connectedness, contagion channels or systemic exposure. A natural summary is edge turnover, the number of links that appear or disappear between two estimated graphs. When the graphs themselves are estimated, however, observed turnover need not correspond to structural change.

A link falsely selected in one period and omitted in another contributes an apparent change even when its latent state is unchanged. Sparsity makes this especially consequential. With \(p\) nodes there are \(\binom p2\) possible undirected edges, while many economic networks contain only \(O(p)\) true links. Small edgewise false-positive probabilities can therefore accumulate over a quadratic number of persistent nonedges. As a result, estimation error that is negligible at the level of a single edge can be first-order for aggregate network turnover.

Measurement error in networks and its effect on network summaries have been studied from several directions. \citet{balachandran2017propagation} analyze the propagation of low-rate measurement error to subgraph counts, while \citet{chang2022estimation} develop moment estimators and inference for subgraph densities from noisy network measurements. More general treatments of unreliable network observations include \citet{newman2018unreliable}. For evolving networks, \citet{macdonald2026inference} develop inference under a dynamic noisy-network model; their comparison framework includes directional edge-transition densities whose sum is the normalized Hamming distance. Differential-network methods instead target changes in precision matrices or their supports directly \citep{zhao2014direct}. In financial applications, sampling variation alone can also generate substantial instability in estimated network structure \citep{bhachech2022instability}.

We study the two-snapshot Hamming-turnover problem when edge-misclassification rates are externally calibrated. Under homogeneous sensitivity and specificity and conditional independence across periods, latent turnover admits a closed-form unbiased adjustment based only on observed turnover and the two estimated graph sizes. The correction is explicit; the substantive contribution is to characterize its accuracy under sparsity, calibration error and cross-period dependence.

The sparse-network consequences are sharp. If each latent graph contains \(O(p)\) edges, there are \(\asymp p^2\) persistent nonedges. Under independence across periods and \(q_p\to0\), their expected contribution to observed turnover is of order \(p^2q_p\), and is \(o(p)\) if and only if \(q_p=o(p^{-1})\). Hence, when network size is proportional to \(p\), a false-positive probability of order \(p^{-1}\) generates apparent turnover of the same order as the network itself.

Calibration must be comparably accurate. If the working false-positive probability differs from its true value by \(\delta\), the leading bias of the adjusted turnover is proportional to \(m\delta\), where \(m=\binom p2\). Thus an error of order \(p^{-1}\) in calibrating the false-positive rate can leave order-\(p\) bias after adjustment. We derive the exact finite-sample bias under misspecified sensitivity and specificity, allow dependence between the two estimated edge indicators, and give sufficient conditions for consistency relative to network size. The distinction from dynamic noisy-network approaches lies in the calibration framework and the sparse-network accuracy requirements. Here, the error rates are treated as calibration quantities, with the emphasis on how accurately they must be known before observed turnover can be interpreted as structural change.

\section{Observed and latent network turnover}

Consider two latent undirected graphs \(G_k=(V,E_k)\), \(k=1,2\), on a common set of \(p\) nodes, with \(m=\binom p2\) possible edges. For edge \(e\), let \(A_{ke}\in\{0,1\}\) be its latent state and \(\widehat A_{ke}\) its estimated state. Define latent turnover \(T=\sum_e|A_{1e}-A_{2e}|\), observed turnover \(\widehat T=\sum_e|\widehat A_{1e}-\widehat A_{2e}|\), and estimated graph size \(\widehat M_k=\sum_e\widehat A_{ke}\).

All expectations and covariances are conditional on \(G_1,G_2\). Let \(s\) denote sensitivity and \(q\) the false-positive probability, so specificity is \(1-q\). We assume \(\Pp(\widehat A_{ke}=1\mid G_1,G_2)=q+(s-q)A_{ke}\), with \(d=s-q\neq0\). Thus the marginal error rates are common across edges and periods. Let \(N_{ab}\) be the number of edges with \((A_{1e},A_{2e})=(a,b)\), \(a,b\in\{0,1\}\), so \(T=N_{01}+N_{10}\).

\begin{proposition}\label{prop:oracle}
	Suppose that, for each edge \(e\), \(\widehat A_{1e}\) and \(\widehat A_{2e}\) are conditionally independent given \(G_1,G_2\). Then
	\[
	\Ee\widehat T
	=
	2q(1-q)N_{00}
	+
	2s(1-s)N_{11}
	+
	\{s(1-q)+(1-s)q\}T.
	\]
	Moreover,
	\begin{equation}\label{eq:correction}
		\widetilde T(s,q)
		=
		\frac{
			\widehat T+(s+q-1)(\widehat M_1+\widehat M_2)-2sqm
		}{
			(s-q)^2
		}
	\end{equation}
	satisfies \(\Ee\{\widetilde T(s,q)\}=T\). No independence across distinct edges is required.
\end{proposition}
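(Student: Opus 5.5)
The plan is to compute everything edge by edge and then sum using linearity of expectation. Linearity needs no independence across edges, which is exactly why the statement only requires within-edge conditional independence. For a fixed edge \(e\), write \(\pi_{ke}=\Pp(\widehat A_{ke}=1\mid G_1,G_2)=q+dA_{ke}\). Since the indicators are binary, \(|\widehat A_{1e}-\widehat A_{2e}|=\widehat A_{1e}+\widehat A_{2e}-2\widehat A_{1e}\widehat A_{2e}\). Conditional independence within the edge then gives
\[
\Ee|\widehat A_{1e}-\widehat A_{2e}|=\pi_{1e}+\pi_{2e}-2\pi_{1e}\pi_{2e}.
\]

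For the first display, I evaluate this expression on the four latent configurations. For \((0,0)\) it is \(2q-2q^2=2q(1-q)\). For \((1,1)\) it is \(2s(1-s)\). For \((0,1)\) and \((1,0)\) it is \(q+s-2qs=s(1-q)+(1-s)q\). Summing over edges and grouping by configuration gives the stated formula for \(\Ee\widehat T\), because \(N_{01}+N_{10}=T\).

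For unbiasedness, I first record \(\Ee\widehat M_k=qm+dM_k\), where \(M_k=\sum_eA_{ke}\). Rather than use the \(N_{ab}\) form, I express \(\Ee\widehat T\) in terms of \(M_1\), \(M_2\) and \(T\). With \(\pi_{ke}=q+dA_{ke}\),
\[
\pi_{1e}+\pi_{2e}-2\pi_{1e}\pi_{2e}=2q(1-q)+d(1-2q)(A_{1e}+A_{2e})-2d^2A_{1e}A_{2e}.
\]
Using \(2A_{1e}A_{2e}=A_{1e}+A_{2e}-|A_{1e}-A_{2e}|\) for binary entries, this becomes
\[
2q(1-q)+\{d(1-2q)-d^2\}(A_{1e}+A_{2e})+d^2|A_{1e}-A_{2e}|.
\]
Summing over edges,
\[
\Ee\widehat T=2q(1-q)m+d(1-2q-d)(M_1+M_2)+d^2T,
\qquad 1-2q-d=1-s-q.
\]

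Next I add \((s+q-1)\Ee(\widehat M_1+\widehat M_2)=(s+q-1)\{2qm+d(M_1+M_2)\}\). The \(M_1+M_2\) terms cancel exactly. The constant becomes \(2q(1-q)m+2qm(s+q-1)=2qsm\), and subtracting \(2sqm\) removes it. What remains is \(d^2T\), and dividing by \(d^2=(s-q)^2\neq0\) gives \(\Ee\widetilde T(s,q)=T\).

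There is no genuine obstacle. The only delicate point is the bookkeeping in the last step: verifying that the coefficient of \(M_1+M_2\) vanishes and that the constant collapses to \(2sqm\). Writing the per-edge expectation in the \((A_{1e}+A_{2e},\,|A_{1e}-A_{2e}|)\) basis makes this transparent.
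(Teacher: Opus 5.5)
Your proof is correct, and the first identity is handled exactly as in the paper; for unbiasedness, however, you take a different route. You compute $\Ee\widehat T=2q(1-q)m+d(1-s-q)(M_1+M_2)+d^2T$ and $\Ee\widehat M_k=qm+dM_k$, then verify by substitution that the numerator of \eqref{eq:correction} has expectation $d^2T$.

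The paper instead \emph{constructs} the estimator. It sets $Z_{ke}=(\widehat A_{ke}-q)/d$ and notes that $\Ee Z_{ke}=A_{ke}$ and, by within-edge independence, $\Ee(Z_{1e}Z_{2e})=A_{1e}A_{2e}$. Plugging these unbiased surrogates into the identity $|A_{1e}-A_{2e}|=A_{1e}+A_{2e}-2A_{1e}A_{2e}$ makes $\sum_e\{Z_{1e}+Z_{2e}-2Z_{1e}Z_{2e}\}$ unbiased for $T$ immediately. The displayed formula then follows from a purely deterministic expansion with no probabilistic content.

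The paper's route explains where the correction comes from and does not need the formula in advance. It also carries over directly to working rates and cross-period covariance: the same $Z'_{ke}$ device drives Proposition~\ref{prop:bias}, and the per-edge summand form is what the variance bound in Corollary~\ref{cor:consistency} uses. Your verification presupposes the formula, but its by-product is informative. It shows that $\Ee\widehat T$ is affine in $(M_1+M_2,T)$ with coefficient $d^2$ on $T$. So observed turnover is latent turnover attenuated by $(s-q)^2$ plus a margin-driven term, and the adjustment $(s+q-1)(\widehat M_1+\widehat M_2)-2sqm$ cancels that term exactly in expectation.
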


\begin{proof}
	For a persistent nonedge, persistent edge and changing edge, the disagreement probabilities are \(2q(1-q)\), \(2s(1-s)\), and \(s(1-q)+(1-s)q\), respectively. Summing over the four latent transition classes gives the first identity.
	
	For the second result, define \(Z_{ke}=(\widehat A_{ke}-q)/d\). Then \(\Ee Z_{ke}=A_{ke}\), while conditional independence across periods gives \(\Ee(Z_{1e}Z_{2e})=A_{1e}A_{2e}\). Since \(A_{1e}+A_{2e}-2A_{1e}A_{2e}=|A_{1e}-A_{2e}|\), the statistic \(\sum_e\{Z_{1e}+Z_{2e}-2Z_{1e}Z_{2e}\}\) is unbiased for \(T\). Expanding it yields \eqref{eq:correction}.
\end{proof}

The first two terms in Proposition~\ref{prop:oracle} are generated by edges whose latent states do not change. Equation~\eqref{eq:correction} removes these contributions in expectation. The adjusted estimator need not lie in \([0,m]\) in finite samples, and truncation generally destroys exact unbiasedness. The denominator also shows that the correction becomes unstable as the discriminatory power \(|s-q|\) approaches zero.

The homogeneous-error assumption is substantive. Conditional on the latent graph pair, the misclassification probability of an edge depends only on its own latent state; averaging heterogeneous edge-specific rates does not in general preserve \eqref{eq:correction}. Across-period independence is natural when the two graphs are constructed from independent samples using separately fixed estimation rules, but need not hold for time-series estimates even when the underlying windows do not overlap. Section~3 studies the effect of misspecified calibration and cross-period dependence.

\section{Calibration, dependence and sparsity}

The rates used in \eqref{eq:correction} may differ from their true values, and edge-selection errors may persist across periods. Write \(L=|E_1|+|E_2|\) and \(\Gamma=\sum_e\Cov(\widehat A_{1e},\widehat A_{2e}\mid G_1,G_2)\). Retain the marginal error model of Section~2 but allow dependence between the two estimated indicators for a given edge. For working rates \(s',q'\), let \(d'=s'-q'\neq0\), \(a=(q-q')/d'\), and \(b=d/d'\).

\begin{proposition}\label{prop:bias}
	The correction evaluated at the working rates satisfies
	\begin{equation}\label{eq:bias}
		\Ee\{\widetilde T(s',q')\}
		=
		b^2T+b(1-b-2a)L+2ma(1-a)-\frac{2\Gamma}{(d')^2}.
	\end{equation}
\end{proposition}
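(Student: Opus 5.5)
We reuse the representation from the proof of Proposition~\ref{prop:oracle}, now at the working rates. Define \(Z'_{ke}=(\widehat A_{ke}-q')/d'\). Expanding \(\sum_e\{Z'_{1e}+Z'_{2e}-2Z'_{1e}Z'_{2e}\}\) gives exactly \(\widetilde T(s',q')\), because this step is pure algebra and does not use any distributional assumption. The plan is therefore to compute the first and second moments of \(Z'_{ke}\) edge by edge and then sum over edges. Dependence between distinct edges plays no role, since only the per-edge pair \((\widehat A_{1e},\widehat A_{2e})\) enters.

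First, the marginal model gives
\[
\Ee Z'_{ke}=\frac{q+dA_{ke}-q'}{d'}=a+bA_{ke}.
\]
Second, writing \(\Ee(Z'_{1e}Z'_{2e})=\Ee Z'_{1e}\,\Ee Z'_{2e}+\Cov(\widehat A_{1e},\widehat A_{2e})/(d')^2\) yields
\[
\Ee(Z'_{1e}Z'_{2e})=(a+bA_{1e})(a+bA_{2e})+\frac{\Cov(\widehat A_{1e},\widehat A_{2e}\mid G_1,G_2)}{(d')^2}.
\]
Combining the two moments, the expectation of the summand for edge \(e\) is
\[
2a+b(A_{1e}+A_{2e})-2\{a^2+ab(A_{1e}+A_{2e})+b^2A_{1e}A_{2e}\}-\frac{2\Cov_e}{(d')^2}.
\]

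Next, we regroup this expression using the identity \(A_{1e}A_{2e}=\{(A_{1e}+A_{2e})-|A_{1e}-A_{2e}|\}/2\). The term \(-2b^2A_{1e}A_{2e}\) becomes \(-b^2(A_{1e}+A_{2e})+b^2|A_{1e}-A_{2e}|\). The coefficient of \(A_{1e}+A_{2e}\) is then \(b-2ab-b^2=b(1-b-2a)\). The coefficient of \(|A_{1e}-A_{2e}|\) is \(b^2\), and the constant is \(2a-2a^2=2a(1-a)\).

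Summing over all \(m\) edges, using \(\sum_e(A_{1e}+A_{2e})=L\), \(\sum_e|A_{1e}-A_{2e}|=T\) and the definition of \(\Gamma\), gives \eqref{eq:bias}. As a check, setting \(s'=s\), \(q'=q\) and \(\Gamma=0\) gives \(a=0\) and \(b=1\), which recovers Proposition~\ref{prop:oracle}.

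The argument involves no real obstacle; the one step requiring care is the regrouping. The product term must be converted into the \(L\) and \(T\) components, rather than into the transition counts \(N_{ab}\). Keeping the covariance term separate from the product of the means is what produces the \(-2\Gamma/(d')^2\) term cleanly.
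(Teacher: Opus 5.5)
Your proposal is correct and follows essentially the same route as the paper: write \(\widetilde T(s',q')\) as \(\sum_e\{Z'_{1e}+Z'_{2e}-2Z'_{1e}Z'_{2e}\}\), use \(\Ee Z'_{ke}=a+bA_{ke}\) and the cross moment \((a+bA_{1e})(a+bA_{2e})+\Cov(\widehat A_{1e},\widehat A_{2e}\mid G_1,G_2)/(d')^2\), then sum over edges. Your edgewise identity \(A_{1e}A_{2e}=\{(A_{1e}+A_{2e})-|A_{1e}-A_{2e}|\}/2\) is simply the per-edge form of the paper's \(\sum_eA_{1e}A_{2e}=N_{11}=(L-T)/2\).
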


\begin{proof}
	Let \(Z'_{ke}=(\widehat A_{ke}-q')/d'\). Then \(\Ee Z'_{ke}=a+bA_{ke}\), and
	\[
	\Ee(Z'_{1e}Z'_{2e})
	=
	(a+bA_{1e})(a+bA_{2e})
	+
	\frac{\Cov(\widehat A_{1e},\widehat A_{2e}\mid G_1,G_2)}{(d')^2}.
	\]
	Summing \(Z'_{1e}+Z'_{2e}-2Z'_{1e}Z'_{2e}\) over edges and using \(\sum_e(A_{1e}+A_{2e})=L\) and \(\sum_eA_{1e}A_{2e}=N_{11}=(L-T)/2\) gives \eqref{eq:bias}.
\end{proof}

With correctly specified marginal rates, \(a=0\) and \(b=1\), so the bias is \(-2\Gamma/d^2\). Positive cross-period covariance therefore suppresses observed disagreement and induces downward bias. Such covariance is plausible when adjacent network estimates are constructed from overlapping rolling windows, since shared observations can induce the same edge-selection errors in both periods. Equation~\eqref{eq:bias} treats the working rates as fixed. If they are estimated from the network data, their stochastic dependence with the estimated graphs must also be accounted for.

A useful special case arises when sensitivity is correctly specified but \(q'=q+\delta\). Under independence across periods,
\begin{equation}\label{eq:delta}
	\Ee\{\widetilde T(s,q+\delta)\}-T
	=
	\frac{d\delta(2T+L-2m)-T\delta^2}{(d-\delta)^2}.
\end{equation}
This follows from \eqref{eq:bias} with \(a=-\delta/(d-\delta)\) and \(b=d/(d-\delta)\). If \(L=O(p)\), \(|d|\) is bounded away from zero and \(\delta\to0\), then
\[
\Ee\{\widetilde T(s,q+\delta)\}-T
=
-\frac{2m}{d}\delta
+
O(p|\delta|+p^2\delta^2).
\]
Since \(m\asymp p^2\), the bias is \(o(p)\) if and only if \(\delta=o(p^{-1})\). Hence a calibration error of order \(p^{-1}\) can leave order-\(p\) bias, comparable to the size of a sparse network. In the sparse regime considered here, this quadratic amplification arises from false-positive calibration because its error is accumulated over \(O(p^2)\) potential nonedges. When \(q'=q\), misspecification of sensitivity enters only through quantities of order \(T\) and \(L\).

The same counting argument governs unadjusted turnover. Under no structural change, a sparse graph has \(N_{00}=m-O(p)\), so persistent nonedges contribute \(2q_p(1-q_p)N_{00}\) to expected observed turnover. If \(q_p\to0\), this contribution is \(o(p)\) if and only if \(q_p=o(p^{-1})\), since \(N_{00}\asymp p^2\). At the boundary \(q_p\asymp p^{-1}\), it remains of order \(p\). Persistent edges contribute \(2s_p(1-s_p)|E_1|\); when \(|E_1|=O(p)\), this term is \(o(p)\) if \(s_p\to1\), but can remain of order \(p\) when sensitivity is bounded away from one. Vanishing false-positive probabilities alone therefore do not guarantee negligible total distortion.

We next give sufficient conditions under which the corrected turnover is consistent relative to network size.

\begin{corollary}\label{cor:consistency}
	Suppose \(L=O(p)\), \(q_p\to0\), and \(|d_p|=|s_p-q_p|\ge\epsilon>0\). Assume independence across periods and independence of the pairs \((\widehat A_{1e},\widehat A_{2e})\) across distinct edges. If \(\widehat s-s_p=o_{\Pp}(1)\) and \(\widehat q-q_p=o_{\Pp}(p^{-1})\), then
	\[
	\frac{\widetilde T(\widehat s,\widehat q)-T}{p}
	\longrightarrow_{\Pp}0.
	\]
	The estimator may be defined arbitrarily on the event \(\widehat s=\widehat q\).
\end{corollary}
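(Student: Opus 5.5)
The plan is to exploit the fact that \(\widetilde T(s',q')\) depends on the data only through the three sufficient statistics \(\widehat T\), \(\widehat M_1+\widehat M_2\) and \(m\). For fixed working rates it is an explicit smooth function of \((s',q')\), so the estimated rates can be handled by a deterministic perturbation bound combined with a law of large numbers for the statistics. Concretely, I will decompose
\[
\widetilde T(\widehat s,\widehat q)-T=\{\widetilde T(s_p,q_p)-T\}+\{\widetilde T(\widehat s,\widehat q)-\widetilde T(s_p,q_p)\}.
\]
The first term is controlled by Proposition~\ref{prop:oracle} plus a variance bound. The second is controlled by writing the difference in terms of centred statistics and the rate errors. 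The event \(\widehat s=\widehat q\) has vanishing probability, because \(|d_p|\ge\epsilon\) and both rate errors are \(o_{\Pp}(1)\). Hence the arbitrary definition there is irrelevant, and we may work on the event \(|\widehat s-\widehat q|\ge\epsilon/2\).

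\textbf{Oracle term.} By Proposition~\ref{prop:oracle}, \(\Ee\widetilde T(s_p,q_p)=T\). Write the oracle statistic as \(\sum_e W_e\) with \(W_e=Z_{1e}+Z_{2e}-2Z_{1e}Z_{2e}\). The \(W_e\) are independent across edges, so \(\Var=\sum_e\Var(W_e)\). I will bound \(\Var(W_e)\) class by class; the point is that persistent nonedges contribute little. For \(A_{1e}=A_{2e}=0\), \(W_e\) is nonzero only when some \(\widehat A_{ke}=1\), which has probability at most \(2q_p\). Since \(|Z_{ke}|\le 1/\epsilon\) and \(|W_e|\le C/\epsilon^2\), this gives \(\Var(W_e)\le Cq_p\). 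All other edges number at most \(L=O(p)\) and have bounded variance. Therefore
\[
\Var\{\widetilde T(s_p,q_p)\}\le C(mq_p+p).
\]
I must check that \(mq_p/p^2\to0\); indeed \(mq_p=o(p^2)\) since \(q_p\to0\). Chebyshev then gives \(\{\widetilde T(s_p,q_p)-T\}/p\to_{\Pp}0\).

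\textbf{Perturbation term.} This is the main obstacle, since \(m\asymp p^2\) multiplies the rate errors. Write \(S=\widehat M_1+\widehat M_2\) and set \(N(s',q')=\widehat T+(s'+q'-1)S-2s'q'm\). The numerator difference is
\[
N(\widehat s,\widehat q)-N(s_p,q_p)=(\Delta_s+\Delta_q)S-2m(\widehat s\widehat q-s_pq_p),
\]
where \(\Delta_s=\widehat s-s_p\) and \(\Delta_q=\widehat q-q_p\). The danger is the term \(2m s_p\Delta_q\) together with \(S\Delta_s\), where \(S\approx 2mq_p+O(p)\). The key is to centre \(S\): since \(\Ee S=2mq_p+d_pL\), it will be shown that \(S-\Ee S=O_{\Pp}(\sqrt{mq_p+p})=o_{\Pp}(p)\). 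Expanding, the \(\Delta_s\) contributions are \(\Delta_s(S-2mq_p)-2m\Delta_s\Delta_q=o_{\Pp}(1)\,O_{\Pp}(p)+o_{\Pp}(p)\), using \(m\Delta_q=o_{\Pp}(p)\). The \(\Delta_q\) contributions are \(\Delta_q(S-2m s_p)=o_{\Pp}(p^{-1})O(p^2)=o_{\Pp}(p)\). So the numerator difference is \(o_{\Pp}(p)\).

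\textbf{Denominator and conclusion.} The denominator changes from \(d_p^2\) to \((\widehat s-\widehat q)^2\), which lies within \(o_{\Pp}(1)\) of \(d_p^2\ge\epsilon^2\). It therefore remains to show that the oracle numerator \(N(s_p,q_p)=d_p^2\widetilde T(s_p,q_p)\) is \(O_{\Pp}(p)\). This follows from the oracle step, since \(\widetilde T(s_p,q_p)=T+o_{\Pp}(p)\) with \(T\le L=O(p)\). Then
\[
\frac{N(\widehat s,\widehat q)}{(\widehat s-\widehat q)^2}-\frac{N(s_p,q_p)}{d_p^2}=o_{\Pp}(p),
\]
and combining the two terms yields the claim.
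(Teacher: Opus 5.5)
Your proposal is correct and follows essentially the same route as the paper: the oracle term is handled by the same $O(p^2q_p+p)$ variance bound and Chebyshev, and the perturbation term by centring $S$ at $2mq_p$ (the paper's $H=O_{\Pp}(p)$), using $m(\widehat q-q_p)=o_{\Pp}(p)$, $\widetilde T(s_p,q_p)=O_{\Pp}(p)$ and a denominator bounded away from zero; the paper merely collects the numerator and denominator changes into a single fraction. One small wording fix: on a persistent nonedge with both indicators zero, $W_e$ equals the constant $-2q_p/d_p-2q_p^2/d_p^2$ rather than zero, but since this constant is $O(q_p)$ your bound $\Var(W_e)=O(q_p)$ is unaffected.
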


\begin{proof}
	At the true rates, write \(\widetilde T(s_p,q_p)=\sum_e\{Z_{1e}+Z_{2e}-2Z_{1e}Z_{2e}\}\), where \(Z_{ke}=(\widehat A_{ke}-q_p)/d_p\). Since \(|d_p|\ge\epsilon\), these summands are uniformly bounded. For a persistent nonedge, the summand is constant unless at least one estimated indicator equals one, an event with probability at most \(2q_p\), so its variance is \(O(q_p)\). At most \(L=O(p)\) edges are present in one or both latent graphs, and their summands have bounded variance. Independence across edges therefore gives
	\[
	\Var\{\widetilde T(s_p,q_p)\}
	=
	O(p^2q_p+p)
	=
	o(p^2),
	\]
	and Proposition~\ref{prop:oracle} with Chebyshev's inequality yields \(\widetilde T(s_p,q_p)-T=o_{\Pp}(p)\).
	
	For calibration error, let \(S=\widehat M_1+\widehat M_2\), \(H=S-2mq_p\), \(U=\widetilde T(s_p,q_p)\), \(h=\widehat s-s_p\), and \(r=\widehat q-q_p\). Since \(\Ee H=d_pL\) and \(\Var(H)=O(p^2q_p+p)\), we have \(H=O_{\Pp}(p)\); also \(U=O_{\Pp}(p)\). Direct subtraction in \eqref{eq:correction} gives
	\[
	\widetilde T(\widehat s,\widehat q)-U
	=
	\frac{
		(h+r)H-2md_pr-2mhr-\{2d_p(h-r)+(h-r)^2\}U
	}{
		(d_p+h-r)^2
	}.
	\]
	Under the stated calibration rates, the numerator is \(o_{\Pp}(p)\), while the denominator is bounded away from zero with probability tending to one. Hence \(\widetilde T(\widehat s,\widehat q)-U=o_{\Pp}(p)\), completing the proof.
\end{proof}

When the number of true links is proportional to \(p\), Corollary~\ref{cor:consistency} gives consistency relative to network size. The calibration requirement is much stronger for the false-positive probability than for sensitivity. In fact, \(\widehat q\) must be accurate to \(o_{\Pp}(p^{-1})\), whereas \(\widehat s\) need only be consistent. Such calibration requires external validation or an identified model based on replicated measurements; two changing graph estimates alone do not generally provide it \citep{chang2022estimation}.

\section{Numerical illustration}

We first quantify how much apparent network change can arise when the latent graph does not change. Let \(|E|=1.5p\), take sensitivity \(s=0.95\), and vary the false-positive probability \(q\). Table~\ref{tab:nochange} reports the exact expectation from Proposition~\ref{prop:oracle}, separating the contribution of persistent nonedges. Even small values of \(q\) can generate large apparent turnover relative to network size. For example, when \(q=0.01\) and \(p=100\), expected observed turnover is \(109.29\), about \(73\%\) of the \(150\)-edge latent network, despite there being no structural change. At \(p=200\), expected turnover exceeds the entire \(300\)-edge network. Even with \(q=0.001\), the turnover-to-network-size ratio rises from \(0.126\) at \(p=50\) to \(0.425\) at \(p=500\).

\begin{table}[!htbp]
	\centering
	\caption{Expected turnover under no structural change}
	\label{tab:nochange}
	\small
	\setlength{\tabcolsep}{4.5pt}
	\begin{tabular}{rrrrrr}
		\toprule
		\(p\) & \(|E|\) & \(q\) &
		Nonedge &
		\(\Ee(\widehat T)\) &
		\(\Ee(\widehat T)/|E|\) \\
		\midrule
		50  & 75  & 0.001 & 2.30   & 9.42   & 0.126 \\
		&     & 0.005 & 11.44  & 18.57  & 0.248 \\
		&     & 0.010 & 22.77  & 29.90  & 0.399 \\
		&     & 0.020 & 45.08  & 52.21  & 0.696 \\[1mm]
		
		100 & 150 & 0.001 & 9.59   & 23.84  & 0.159 \\
		&     & 0.005 & 47.76  & 62.01  & 0.413 \\
		&     & 0.010 & 95.04  & 109.29 & 0.729 \\
		&     & 0.020 & 188.16 & 202.41 & 1.349 \\[1mm]
		
		200 & 300 & 0.001 & 39.16  & 67.66  & 0.226 \\
		&     & 0.005 & 195.02 & 223.52 & 0.745 \\
		&     & 0.010 & 388.08 & 416.58 & 1.389 \\
		&     & 0.020 & 768.32 & 796.82 & 2.656 \\[1mm]
		
		500 & 750 & 0.001 & 247.75  & 319.00  & 0.425 \\
		&     & 0.005 & 1233.80 & 1305.05 & 1.740 \\
		&     & 0.010 & 2455.20 & 2526.45 & 3.369 \\
		&     & 0.020 & 4860.80 & 4932.05 & 6.576 \\
		\bottomrule
	\end{tabular}
	
	\begin{flushleft}
		\footnotesize
		Notes: The latent graph is unchanged across periods and contains \(1.5p\) edges. Sensitivity is \(s=0.95\). ``Nonedge'' denotes the persistent-nonedge contribution \(2q(1-q)N_{00}\).
	\end{flushleft}
\end{table}

We next introduce genuine structural change and examine the role of calibration. Take \(p=100\) and \(|E_1|=|E_2|=150\). Fifteen links are removed and fifteen previously absent links are added, so the latent turnover is \(T=30\). Estimated edge states are generated independently across edges and periods with \(s=0.95\) and \(q=0.01\), using \(5000\) replications. With the correct calibration \(q'=0.010\), the adjusted turnover has mean \(30.098\) and RMSE \(11.328\), whereas the mean unadjusted turnover is \(135.885\) with RMSE \(106.388\).

Table~\ref{tab:calibration} then varies only the false-positive probability used in the correction. The simulated means closely track the exact expectations in \eqref{eq:delta}; for every \(q'\), their difference is less than \(0.62\) Monte Carlo standard errors of the simulated mean. The economically relevant feature is the scale of the calibration effect. Using \(q'=0.009\) yields mean adjusted turnover \(40.225\), whereas \(q'=0.011\) yields \(19.928\). Thus an absolute calibration error of only \(0.001\) changes estimated turnover by roughly one-third of the true \(T=30\). At \(q'=0.008\) or \(q'=0.012\), RMSE is approximately twice that under correct calibration.

\begin{table}[!htbp]
	\centering
	\small
	\caption{Sensitivity of adjusted turnover to false-positive calibration}
	\label{tab:calibration}
	\setlength{\tabcolsep}{6pt}
	\begin{tabular}{cccccc}
		\toprule
		\(q'\) & Simulated mean & Exact mean & Empirical Bias & RMSE & MCSE \\
		\midrule
		0.008 & 50.309 & 50.212 & 20.309  & 23.223 & 0.159 \\
		0.009 & 40.225 & 40.127 & 10.225  & 15.236 & 0.160 \\
		0.010 & 30.098 & 30.000 & 0.098   & 11.328 & 0.160 \\
		0.011 & 19.928 & 19.829 & -10.072 & 15.183 & 0.161 \\
		0.012 & 9.714  & 9.615  & -20.286 & 23.267 & 0.161 \\
		\bottomrule
	\end{tabular}
	
	\begin{flushleft}
		\footnotesize
		Notes: \(p=100\), \(|E_1|=|E_2|=150\), \(T=30\), \(s=0.95\), \(q=0.01\), and \(5000\) replications. Only the working false-positive probability \(q'\) varies. Exact means are obtained from \eqref{eq:delta}; Empirical bias is the simulated mean minus \(T=30\); MCSE is the Monte Carlo standard error of the simulated mean.
	\end{flushleft}
\end{table}

The final experiment treats \(q\) as an estimated calibration quantity rather than a known constant. In each replication, \(K_0\sim\operatorname{Binomial}(n_0,0.01)\) is generated independently of the two estimated graphs and \(\widehat q=K_0/n_0\), where \(n_0\) denotes the number of independent validation classification trials with known nonedge status. Sensitivity remains fixed at \(s=0.95\). 

Table~\ref{tab:validation} shows that calibration uncertainty primarily affects dispersion rather than the center of the estimator in this design. With \(n_0=1000\), RMSE rises to \(34.278\), more than three times the known-\(q\) value, and \(19.42\%\) of adjusted estimates are negative. The effect falls rapidly as the validation sample grows: at \(n_0=10{,}000\), RMSE is \(15.246\), while at \(n_0=100{,}000\) it is \(11.727\), close to the known-calibration benchmark. Note that, the small empirical biases are not to be interpreted as exact unbiasedness after estimating \(q\). The experiment illustrates the finite-sample price of calibration uncertainty.

\begin{table}[!htbp]
	\centering
	\small
	\caption{Adjusted turnover with an estimated false-positive probability}
	\label{tab:validation}
	\setlength{\tabcolsep}{6pt}
	\begin{tabular}{lccccc}
		\toprule
		\(n_0\) & Mean & Empirical Bias & RMSE & MCSE & Negative (\%) \\
		\midrule
		Known \(q\) & 30.098 & 0.098  & 11.328 & 0.160 & 0.30 \\
		1,000       & 29.765 & -0.235 & 34.278 & 0.485 & 19.42 \\
		10,000      & 30.131 & 0.131  & 15.246 & 0.216 & 2.60 \\
		100,000     & 30.091 & 0.091  & 11.727 & 0.166 & 0.42 \\
		1,000,000   & 30.125 & 0.125  & 11.372 & 0.161 & 0.28 \\
		\bottomrule
	\end{tabular}
	
	\begin{flushleft}
		\footnotesize
		Notes: The graph design and \(5000\) replications are as in Table~\ref{tab:calibration}. For each replication, \(K_0\sim\operatorname{Binomial}(n_0,0.01)\) independently of the graph estimates and \(\widehat q=K_0/n_0\). Negative estimates are retained.
	\end{flushleft}
\end{table}

Figure~\ref{fig:rates} illustrates the rate result behind these calculations. It plots the persistent-nonedge contribution relative to network size for \(q_p=p^{-1/2}\), \(p^{-1}\), and \(p^{-3/2}\). The ratio diverges in the first case, remains bounded away from zero at the boundary \(q_p\asymp p^{-1}\), and vanishes in the third. At \(p=500\), the boundary sequence gives a ratio of \(0.660\), close to its limiting value \(2/3\). The figure therefore isolates the sparse-network mechanism in Section~3, that an edgewise false-positive probability can vanish while aggregate apparent turnover remains first-order relative to the network itself.

\begin{figure}[!htbp]
	\centering
	\includegraphics[width=.72\textwidth]{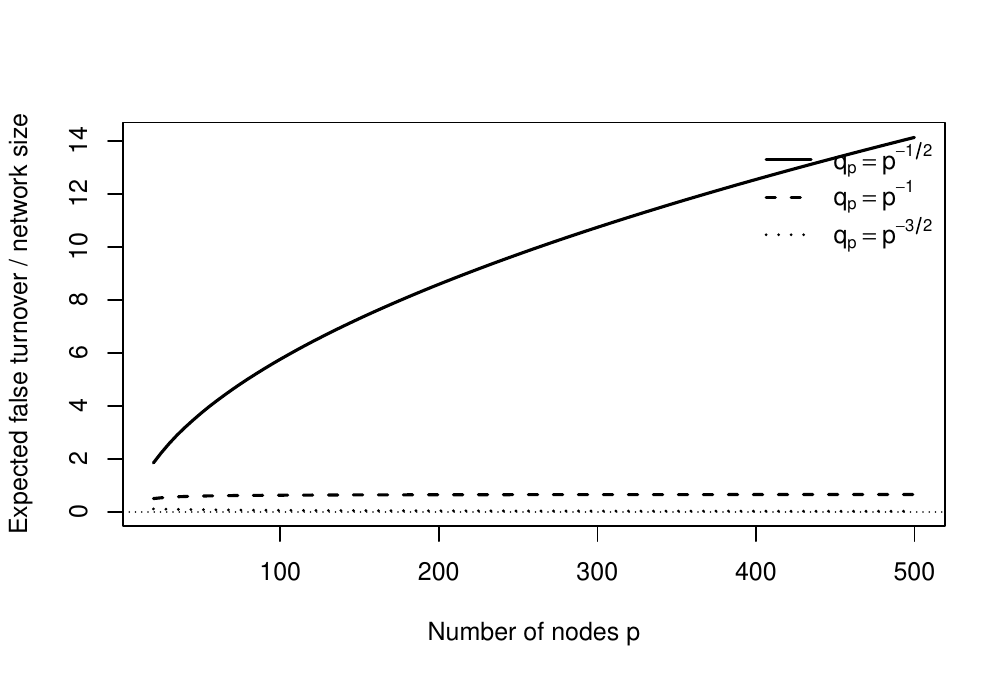}
	\caption{Expected false-positive turnover from persistent nonedges relative to network size. The latent graph contains \(1.5p\) edges and is unchanged across periods.}
	\label{fig:rates}
\end{figure}

\FloatBarrier

\section{Conclusion}

In this paper, we study two-snapshot network turnover under edge misclassification and derive an unbiased adjustment under known error rates and conditional independence across periods. We then show that sparsity imposes much sharper requirements on false-positive control than edgewise accuracy alone would suggest. For turnover error to be negligible relative to an \(O(p)\) network, both false-positive probabilities and their calibration errors must be smaller than the \(p^{-1}\) scale. We also characterize the bias induced by cross-period dependence and give sufficient conditions for consistency relative to network size.

For empirical work, the main implication is that small edgewise error rates do not by themselves guarantee reliable measurement of structural rewiring. The calibration of false positives, together with dependence across network estimates, is first-order for interpreting observed turnover in sparse economic networks.

\bibliographystyle{apalike}
\bibliography{noise_turnover_bib_2}
\end{document}